\algrenewcommand{\algorithmiccomment}[1]{\hskip1em$\%$ #1}
\pgfplotsset{compat=newest}
\pgfplotsset{plot coordinates/math parser=false}
\newlength\figureheight
\newlength\figurewidth
\newcommand{\eqd}{ \ .}
\newcommand{\eqc}{ \ ,}
\newcommand{\nt}{M}
\newcommand{\nd}{{n_d}}
\newcommand{\nz}{{n_z}}
\DeclareMathOperator{\E}{\mathbb{E}}
\DeclareMathOperator{\var}{\text{var}}
\DeclareMathOperator{\diag}{\text{diag}}
\newtheorem{Lemma}{Lemma}
\title{\LARGE \bf
Cautious NMPC with Gaussian Process Dynamics for Autonomous Miniature Race Cars
}
\author{Lukas Hewing$^{1}$, {Alexander Liniger$^{2}$} and Melanie N. Zeilinger$^{1}$
\thanks{This work was supported by the Swiss National Science Foundation under grant no. PP00P2 157601 / 1.}
\thanks{$^{1}$Institute for Dynamic Systems and Control, ETH Zurich, Zurich, Switzerland
        {\tt\small lhewing@ethz.ch, mzeilinger@ethz.ch}}%
\thanks{$^{2}$Institute for Automatic Control, ETH Zurich, Zurich, Switzerland
	{\tt\small aliniger@ethz.ch}}%
}
\begin{document}

\maketitle
\thispagestyle{empty}
\pagestyle{empty}

\begin{abstract}
This paper presents an adaptive high performance control method for autonomous miniature race cars. Racing dynamics are notoriously hard to model from first principles, which is addressed by means of a cautious nonlinear model predictive control (NMPC) approach that learns to improve its dynamics model from data and safely increases racing performance. The approach makes use of a Gaussian Process (GP) and takes residual model uncertainty into account through a chance constrained formulation. We present a sparse GP approximation with dynamically adjusting inducing inputs, enabling a real-time implementable controller. The formulation is demonstrated in simulations, which show significant improvement with respect to both lap time and constraint satisfaction compared to an NMPC without model learning.
\end{abstract}

\section{Introduction}
\label{sc:Introduction}
%
Control of autonomous cars is a challenging task and has attracted considerable
attention in recent years~\cite{Buehler2009}. One particular case of autonomous
driving is autonomous racing, where the goal is to drive around a track as fast
as possible, potentially to race against competitors and to avoid
collisions~\cite{Kritayakirana2012}. In order to achieve high performance at
these extreme conditions, racing teams today spend a significant amount of time
and effort on modeling, which is challenging especially near the limits of tire
adhesion~\cite{Guiggiani2014}. Learning-based control methods have been proposed
to address this challenge and show great potential towards improving racing
performance~\cite{Kolter2010}. They do, however, often suffer from poor model
accuracy and performance during transient learning phases. This can lead to
violation of critical constraints~\cite{Akametalu2014} related to keeping the
car on track and avoiding collisions, compromising not only performance, but the
success of the entire race. In addition, iteratively learning the racing task on
a lap-by-lap basis, as considered e.g.\ in~\cite{Kapania2015}, suffers from poor
generalization and does typically not allow for maintaining high performance for
dynamic racing tasks, such as obstacle avoidance or overtaking. This paper
addresses these challenges by learning the dynamics model from data and
considering model uncertainty to ensure constraint satisfaction in a nonlinear
model predictive control (NMPC) approach, offering a flexible framework for
racing control.
\par
Recently, a number of autonomous racing control methods were presented that rely on NMPC formulations. An NMPC racing approach for miniature race cars was proposed in~\cite{Liniger2015}, which uses a contouring control formulation to maximize track progress over a finite horizon and enables obstacle avoidance. It was extended to a stochastic setting in order to take model uncertainty into account in~\cite{Carrau2016} and~\cite{Liniger2017a}. Using model learning in an MPC framework allows for generalizing from collected data and for improving performance in varying racing tasks. This was, for instance, demonstrated in~\cite{Niekerk2017} by using the mean estimate of a Gaussian Process (GP) as a dynamics model for an NMPC method based on~\cite{Liniger2015}. Furthermore, the MPC approach recently proposed in~\cite{Rosolia2017a} was applied to the problem of autonomous racing, where the model is improved with an iterative parameter estimation technique~\cite{Rosolia2017b}.
 \par

The method presented in this paper makes use of GP regression to improve the dynamics model from measurement data, since GPs inherently provide a measure for residual model uncertainty, which is integrated in a cautious NMPC controller.
To this end we extend the approach presented in~\cite{Liniger2015} with a learning module and reformulate the controller in a stochastic setting.
A key element differentiating the approach from available results is the stochastic treatment of a GP model in an NMPC controller to improve both performance and constraint satisfaction properties.
We derive a tractable formulation of the problem that exploits both the improved dynamics model and the uncertainty and show how chance constraints on the states can be approximated in deterministic form.
The framework thereby allows for specifying a minimum probability of satisfying critical constraints, such as track boundaries, offering an intuitive and systematic way of defining a desired trade-off between aggressive driving and safety in terms of collision avoidance. \par
While the use of GPs in MPC offers many benefits, it poses computational challenges for use with fast sampled and larger scale systems, such as the race car problem, since the evaluation complexity of GPs is generally high and directly scales with the number of data points considered. Various approaches to address this limitation have been presented in the literature. One class of methods relies on an approximation by a finite number of basis functions, such as the sparse spectrum approximation~\cite{Lazaro-Gredilla2010}, which is also used in the GP-based NMPC in~\cite{Niekerk2017}. We present an approach for predictive control based on a sparse GP approximation using inducing inputs~\cite{Quinonero-candela2005}, which are selected according to an approximate trajectory in state-action space. This enables a high-fidelity local approximation currently relevant for control at a given measured state, and facilitates real-time implementability of the presented controller.
\par
 We finally evaluate the proposed cautious NMPC controller in simulations of a race. The results demonstrate that it provides safe and high performance control at sampling times of $30 \text{ ms}$, which is computationally on par with NMPC schemes without model learning~\cite{Liniger2015}, while improving racing performance and constraint satisfaction. We furthermore demonstrate robustness towards process noise, indicating fitness for hardware implementation.
%
\section{Preliminaries}
\label{sc:SPGIntro}
%
In the following we specify the notation used in the paper and briefly introduce GP regression and sparse approximations based on inducing inputs as relevant to the presented control approach.
\subsection{Notation}
\label{ssc:Notation}
%
For two matrices or vectors we use $[A;B] := {[A^T \, B^T]}^T$ for vertical matrix/vector concatenation. We use ${[y]}_i$ to refer to the $i$-th element of the vector $y$, and similarly ${[A]}_{\cdot,i}$ for the $i$-th column of matrix A. A normal distribution with mean $\mu$ and variance $\Sigma$ is denoted $\mathcal{N}(\mu,\Sigma)$. We use $\Vert x \Vert$ for the 2-norm of vector $x$ and $\diag(x)$ to express a diagonal matrix with elements given by the vector x. The gradient of a vector-valued function $f: \mathbb{R}^{n_z} \rightarrow \mathbb{R}^{n_f}$ with respect to vector $x~\in~\mathbb{R}^{n_x}$ is denoted $\nabla_{\!x} f : \mathbb{R}^{n_z} \rightarrow  \mathbb{R}^{n_f \times n_x}$.
%
\subsection{Gaussian Process Regression}
\label{ssc:GP}
%
Consider M input locations collected in the matrix $\mathbf{z} = \nobreak [z_1^T; \ldots ; z_\nt^T] \in \mathbb{R}^{\nt \times \nz}$ and corresponding measurements $\mathbf{y} = [y_1^T; \ldots ; y_\nt^T] \in \mathbb{R}^{\nt \times \nd}$ arising from an unknown function $g(z):\mathbb{R}^\nz \rightarrow \mathbb{R}^{\nd}$ under the following statistical model
\begin{equation}
y_j = g(z_j) + \omega_j \eqc \label{eq:likelyhood}
\end{equation}
where $\omega_j$ is i.i.d. Gaussian noise with zero mean and diagonal variance $\Sigma_w = \diag([\sigma^2_1; \ldots; \sigma^2_{\nd}]$. Assuming a GP prior on $g$ in each output dimension $a \in \{1, \ldots, \nd \}$, the measurement data is normally distributed with
\begin{equation*}
	{[\mathbf{y}]}_{\cdot,a} \sim \mathcal{N}(0,K_{\mathbf{z}\mathbf{z}}^a + \sigma_a^2) \eqc
\end{equation*}
where $K_{\mathbf{z}\mathbf{z}}^a$ is the Gram matrix of the data points using the kernel function $k^a(\cdot,\cdot)$ on the input locations $\mathbf{z}$, i.e. ${[K_{\mathbf{z}\mathbf{z}}^a]}_{ij} = k^a(z_i,z_j)$. The choice of kernel functions $k^a$ and its parameterization is the determining factor for the inferred distribution of $g$ and is typically specified using prior process knowledge and optimization based on observed data~\cite{Rasmussen2006}. Throughout this paper we consider the squared exponential kernel function
\begin{equation*}
	k(z,\tilde{z}) = \sigma_f^2 \exp\left(-\frac{1}{2}{(z-\tilde{z})}^T L^{-1} (z-\tilde{z})\right) \eqc
\end{equation*}
in which $L \in \mathbb{R}^{\nz \times \nz}$ is a positive diagonal length scale matrix. It is, however, straightforward to use any other (differentiable) kernel function. \par
The joint distribution of the training data and an arbitrary test point $z$ in output dimension $a$ is given by
\begin{equation}
	p({[y]}_a, {[\mathbf{y}]}_{\cdot,a}) \sim \mathcal{N}\left(0, \begin{bmatrix} K^a_{\mathbf{z}\mathbf{z}} & K^a_{\mathbf{z}z} \\ K^a_{z\mathbf{z}} & K^a_{zz} \end{bmatrix} \right) \label{eq:jointDist} \eqc
\end{equation}
where ${[K^a_{\mathbf{z}z}]}_j = k^a(z_j,z)$, $K^a_{z\mathbf{z}} = {(K^a_{\mathbf{z}z})}^T$ and similarly $K^a_{zz} = k^a(z,z)$. The resulting conditional distribution is Gaussian with $p({[y]}_a\,|\,{[\mathbf{y}]}_{\cdot,a}) \sim \mathcal{N}(\mu^d_a(z),\Sigma^d_a(z))$ and
\begin{subequations}\label{eq:GP_post}
	\begin{align}
		\mu_a^d(z) &=  K_{z\mathbf{z}}^a{(K_{\mathbf{z}\mathbf{z}}^a + I \sigma^2_{a})}^{-1} {[\mathbf{y}]}_{\cdot,a} \eqc \\
		\Sigma_a^d(z) &= K^a_{zz} - K_{z\mathbf{z}}^a{(K_{\mathbf{z}\mathbf{z}}^a + I \sigma^2_{a})}^{-1} K_{\mathbf{z}z}^a \eqd
	\end{align}
\end{subequations}
We call the resulting GP approximation of the unknown function $g(z)$
\begin{equation}
	d(z) \sim \mathcal{N}(\mu^d(z),\Sigma^d(z)) \label{eq:GP}
\end{equation}
with $\mu^d = [\mu^d_1;\ldots;\mu^d_\nd]$ and $\Sigma^d = \diag([\Sigma^d_1;\ldots ;\Sigma^d_\nd])$. \par
Evaluating~\eqref{eq:GP} has cost $\mathcal{O}(\nd \nz \nt)$ and $\mathcal{O}(\nd \nz \nt^2)$ for mean and variance, respectively and thus scales with the number of data points. For many data points or fast real-time applications this limits the use of a GP model. To overcome these issues, various approximation techniques have been proposed, one class of which is sparse Gaussian processes using inducing inputs~\cite{Quinonero-Candela2007}, briefly outlined in the following.
%
\subsection{Sparse Gaussian Processes}
\label{ssc:sGP}
Most sparse GP approximations can be understood using the concept of inducing targets $\mathbf{y}_{ind}$ at inputs $\mathbf{z}_{ind}$ and an inducing conditional distribution $q$ to approximate the joint distribution~\eqref{eq:jointDist} by assuming that test points and training data are conditionally independent given $\mathbf{y}_{ind}$~\cite{Quinonero-candela2005}:
\begin{align*}
	&p({[y]}_a,{[\mathbf{y}]}_{\cdot,a}) = \int \! p({[y]}_a,{[\mathbf{y}]}_{\cdot,a} \, | \, \mathbf{y}_{ind}) p(\mathbf{y}_{ind}) \, \dif \mathbf{y}_{ind} \\&\approx 	\int \! q({[y]}_a \, | \, \mathbf{y}_{ind}) q({[\mathbf{y}]}_{\cdot,a} \, | \, \mathbf{y}_{ind}) p(\mathbf{y}_{ind}) \dif \mathbf{y}_{ind} \, .
\end{align*}
There are numerous options for selecting the inducing inputs, e.g.\ heuristically as a subset of the original data points, by treating them as hyperparameters and optimizing their location~\cite{Snelson2006}, or letting them coincide with test points~\cite{Tresp2000}. \par
In this paper, we make use of the state-of-the-art Fully Independent Training Conditional (FITC) approximation to approximate the GP distribution and reduce computational complexity~\cite{Snelson2006}. Given a selection of inducing inputs $\mathbf{z}_{ind}$ and using the shorthand notation $Q^a_{\zeta\tilde{\zeta}} := \nobreak K^a_{\zeta\mathbf{z}_{ind}} {(K^a_{\mathbf{z}_{ind}\mathbf{z}_{ind}})}^{-1} K^a_{\mathbf{z}_{ind}\tilde{\zeta}}$ the approximate posterior distribution is given by
\begin{subequations}\label{eq:sGP}
	\begin{align}
		\tilde{\mu}^{d}_a(z) &= Q^a_{z\mathbf{z}}{(Q^a_{\mathbf{z}\mathbf{z}} + \Lambda)}^{-1}{[\mathbf{y}]}_{\cdot,a} \eqc \\
		\tilde{\Sigma}^{d}_a(z) & = K^a_{zz} - Q^a_{z\mathbf{z}}{(Q^a_{\mathbf{z}\mathbf{z}} + \Lambda)}^{-1}Q_{\mathbf{z}z}
	\end{align}
\end{subequations}
with $\Lambda = \diag(K^a_{\mathbf{z}\mathbf{z}} - Q^a_{\mathbf{z}\mathbf{z}} + I \sigma^2_{a})$. Concatenating the output dimensions similar to~\eqref{eq:GP} we arrive at the approximation
\begin{equation*}
	\tilde{d}(z) \sim \mathcal{N}(\tilde{\mu}^d(z),\tilde{\Sigma}^d(z)) \eqd
\end{equation*}
Several of the matrices used in~\eqref{eq:sGP} can be precomputed such that the evaluation complexity becomes independent of the number of original data points. Using $\tilde{M}$ inducing points, the computational complexity for evaluating the sparse GP at a test point is reduced to $\mathcal{O}(\nd \nz \tilde{M})$ and $\mathcal{O}(\nd \nz \tilde{M}^2)$ for the predictive mean and variance, respectively. 
%
%
\section{Race Car Modeling}
\label{sc:system}
%
%
\begin{figure}[h]
	\centering
	\def\svgwidth{5.5cm}
	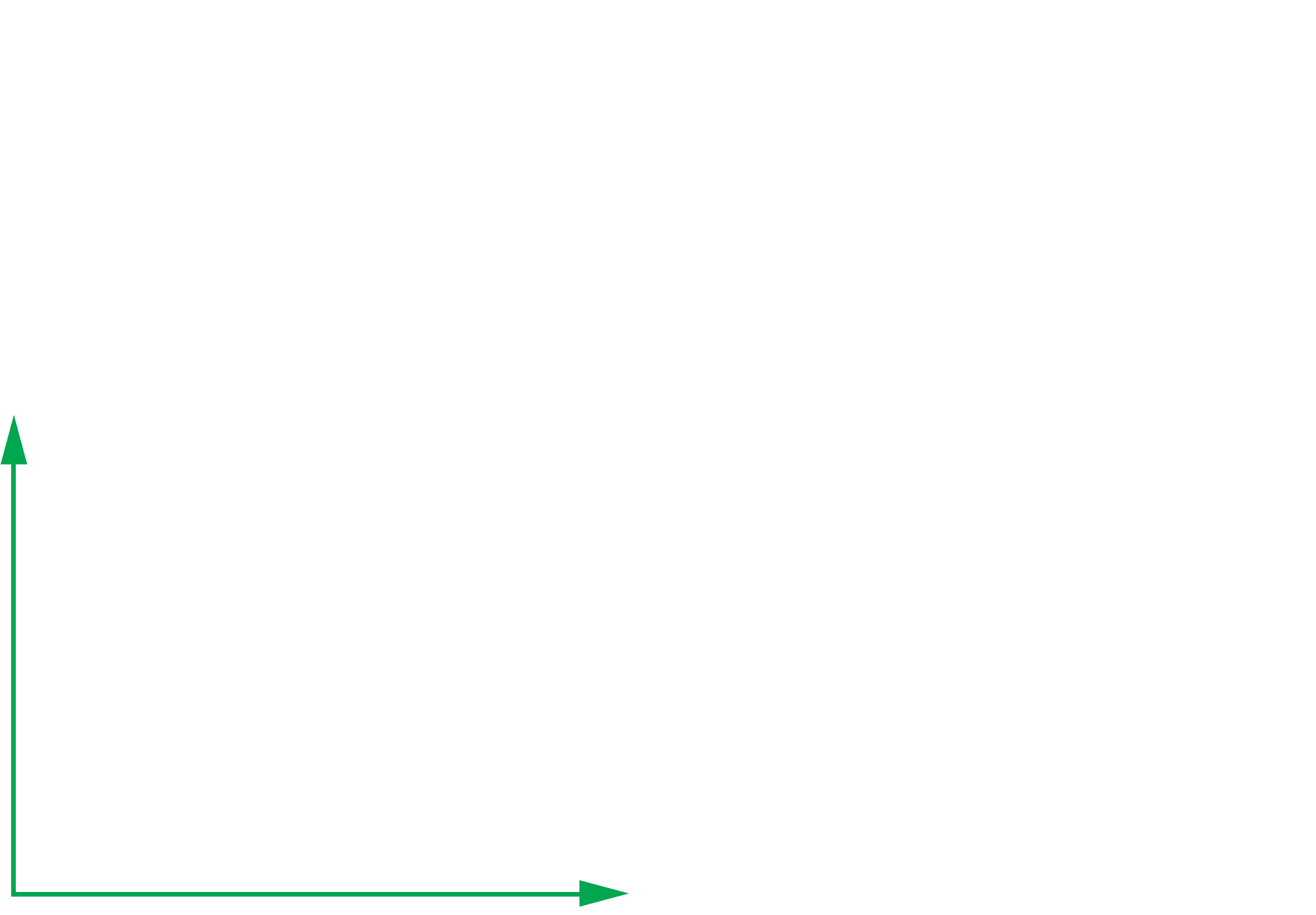
	\caption{Schematic of the car model.}\label{fig:bicycle}
\end{figure}
This section presents the race car setup and nominal modeling of the car dynamics, which will serve as a base model for the learning-based control approach. This is largely based on material presented in~\cite{Liniger2015}, which provides a more detailed exposition.
\subsection{Car Dynamics}
\label{ssc:NominalDynamics}
%
We consider the following model structure to describe the dynamics of the miniature race cars
\begin{align} \label{eq:contSys}
	\dot{x} =f_c(x,u) + B_d (g_c(x,u) + w) \eqc
\end{align}
where $f_c(x,u)$ are the nominal system dynamics of the car modeled from first principles, and $g_c(x,u)$ reflects unmodeled dynamics. The considered nominal dynamics are obtained from a bicycle model with nonlinear tire forces as shown in Figure~\ref{fig:bicycle}, resulting in
\begin{align}
f_c(x,u) & = \begin{bmatrix*}[l] v_x \cos(\Phi) - v_y \sin(\Phi)\\
v_x \sin(\Phi) + v_y \cos(\Phi)\\
\omega \\
\frac{1}{m}\Big(F_{r,x}(x,u) - F_{f,y}(x,u) \sin{\delta} + m v_y \omega \Big)\\
\frac{1}{m} \Big( F_{r,y}(x,u) + F_{f,y}(x,u) \cos{\delta} - m v_x \omega \Big) \\
\frac{1}{I_z}\Big(F_{f,y}(x,u) l_f \cos{\delta} - F_{r,y}(x,u) l_r \Big)\end{bmatrix*} \eqc \label{eq:nomDynamics_cont}
\end{align}
where $x = [X;Y;\Phi;v_x;v_y;\omega]$ is the state of the system, with position $(X,Y)$, orientation $\Phi$, longitudinal and lateral velocities $v_x$ and $v_y$, and yaw rate $\omega$. The inputs to the system are the motor duty cycle $p$ and the steering angle $\delta$, i.e., $u=[p; \delta]$. Furthermore, $m$ is the mass, $I_z$ the moment of inertia and $l_r$ and $l_f$ are the distance of the center of gravity from the rear and front tire, respectively. The most difficult components to model are the tire forces $F_{f,y}$ and $F_{r,y}$ and the drivetrain force $F_{r,x}$. The tires are modeled by a simplified Pacejka tire model~\cite{Pacejka1992} and the drivetrain using a DC motor model combined with a friction model. For the exact formulations of the forces, we refer to~\cite{Liniger2015}. \par
In order to account for model mismatch due to inaccurate parameter choices and limited fidelity of this simple model, we integrate $g_c(x,u)$ capturing unmodeled dynamics, as well as additive Gaussian white noise $w$.
Due to the structure of the nominal model, i.e.\ since the dynamics of the first three states are given purely by kinematic relationships, we assume that the model uncertainty, as well as the process noise $w$, only affect the velocity states $v_x$, $v_y$ and $\omega$ of the system, that is $B_d = [0;I_3]$. \par
For the use in a discrete-time MPC formulation, we finally discretize the system using the Euler forward scheme with a sampling time of $T_s$, resulting in the following description,
\begin{align}
	x(k\!+\!1) = f(x(k),u(k)) \! + \! B_d (g(x(k),u(k)) \! + \! w(k)), \label{eq:disc_system}
\end{align}
where $w(k)$ is i.i.d.\ normally distributed process noise with $w(k) \sim \mathcal{N}(0,\Sigma^w)$ and $\Sigma^w = \text{diag}[\sigma^2_{v_x};\sigma^2_{v_y};\sigma^2_{\omega}]$, which, together with the uncertain dynamics function $g$, will be inferred from measurement data.
%
\subsection{Race Track and Constraints}
\label{ssc:RaceTrack}
%
We consider a race track given by its centerline and a fixed track width. The centerline is described by a piecewise cubic spline polynomial, which is parametrized by the path length $\Theta$. Given a $\Theta$, we can evaluate the corresponding centerline position $(X_c(\Theta),Y_c(\Theta))$ and orientation $\Phi_c(\Theta)$. By letting $\tilde{\Theta}$ correspond to the projection of $(X,Y)$ on the centerline, the constraint for the car to stay within the track boundaries is expressed as
\begin{equation}
	\mathcal{X}(\tilde{\Theta}) := \left\{x \, \middle| \, \left \Vert \begin{bmatrix} X \\ Y \end{bmatrix} - \begin{bmatrix} X_c(\tilde{\Theta}) \\ Y_c(\tilde{\Theta}) \end{bmatrix} \right\Vert \leq r \right\} \eqc \label{eq:trackCon}
\end{equation}
where $r$ is half the track width. \par
Additionally, the system is subject to input constraints,
\begin{align}
	\mathcal{U} = \left\{u \, \middle| \, \begin{bmatrix} 0 \\ -\delta_{\max} \end{bmatrix} \leq \begin{bmatrix} p \\ \delta \end{bmatrix} \leq \begin{bmatrix} 1 \\ \delta_{\max} \end{bmatrix}  \right\} \eqc \label{eq:uCon}
\end{align}
i.e.\ the steering angle is limited to a maximal angle $\delta_{\max}$ and the duty cycle has to lie between zero and one.
%
\section{Learning-based Controller Design}
\label{sc:ControllerDesign}
%
In the following, we first present the model learning module that is subsequently used in a cautious NMPC controller. We briefly state the contouring control formulation~\cite{Liniger2015}, serving as the basis for the controller and integrate the learning-based dynamics using a stochastic GP model. Afterwards, we introduce suitable approximations to reduce computational complexity and render the control approach real-time feasible.
%
\subsection{Model Learning}
\label{ssc:ModelLearning}
%
We apply Gaussian process regression~\cite{Rasmussen2006} to infer the vector-valued function $g$ of the discrete-time system dynamics~\eqref{eq:disc_system} from previously collected measurement data of states and inputs. Training data is generated as the deviation to the nominal system model, i.e.\ for a specific data point:
\begin{align*}
	y_j &= g(x(j),u(j)) + w(j) = B_d^{\dagger}\left(x(j\!+\!1) - f(x(j),u(j))\right) \eqc \\
	z_j &= [x(j) ; u(j)] \eqc
\end{align*} %
where $^\dagger$ is the pseudoinverse. Note that this is in the form of~\eqref{eq:likelyhood} and we can directly apply~\eqref{eq:GP_post} to derive a GP model $d(x_i,u_i)$ from the data, resulting in the stochastic model
\begin{align}
	x_{i+1} = f(x_{i},u_i) + B_d (d(x_i,u_i) + w_i) \eqd \label{eq:learned_system}
\end{align}
The state $x_{i}$ obtained from this model, which will be used in a predictive controller, is given in form of a stochastic distribution.
%
\subsection{Contouring Control}
\label{ssc:ContouringControl}
%
The learning-based NMPC controller makes use of a contouring control formulation, which has been introduced in~\cite{Faulwasser2009, Lam2010} and was shown to provide good racing performance in~\cite{Liniger2015}.
The objective of the optimal contouring control formulation is to maximize progress along the race track. An approximation of the car position along the centerline is introduced as an optimization variable by including integrator dynamics $\Theta_{i+1} = \Theta_{i} + v_i$, where $\Theta_i$ is a position along the track at time step $i$ and $v_i$ is the incremental progress. The progress along the centerline over the horizon is then maximized by means of the overall incremental progress $\sum_{i = 0}^N v_i$. \par
In order to connect the progress variable to the race car's position, $\Theta_i$ is linked to the projection of the car on the centerline. This is achieved by minimizing the so-called lag error $\hat{e}^l$ and contouring error $\hat{e}^c$, defined as
\begin{align*}
    \hat{e}^l(x_i,\Theta_i) = &-\cos(\Phi(\Theta_i))(X_i-X_c(\Theta_i)) \nonumber \\  &- \sin(\Phi(\Theta_i))(Y_i-Y_c(\Theta_i))\,,\\
        \hat{e}^c(x_i,\Theta_i) = &\phantom{+}\sin(\Phi(\Theta_i))(X_i-X_c(\Theta_i)) \nonumber \\ &- \cos(\Phi(\Theta_i))(Y_i-Y_c(\Theta_i))\,.
\end{align*}
For small contouring error $\hat{e}^c$, the lag error $\hat{e}^l$ approximates the distance between the projection of the car's position and $(X_c(\Theta_i),Y_c(\Theta_i))$, such that a small lag error ensures a good approximate projection. The stage cost function is then formulated as
\begin{align}
    l(x_i,u_i,\Theta_i,v_i) = &\Vert \hat{e}^c(x_i,\Theta_i) \Vert^2_{q_c} + \Vert \hat{e}^l(x_i,\Theta_i) \Vert^2_{q_l} \nonumber \\
        &- \gamma v_i + l_{reg}(\Delta u_i, \Delta v_i) \eqd \label{eq:cost}
\end{align}
The term $-\gamma v_i$ encourages the progress along the track, using the relative weighting parameter $\gamma$. The parameters $q_c$ and $q_l$ are weights on contouring and lag error, respectively, and $l_{reg}(\Delta u_i,\Delta v_i)$ is a regularization term penalizing large changes in the control input and incremental progress $l_{reg}(\Delta u_i,\Delta v_i) = \Vert u_i - u_{i-1} \Vert^2_{R_u} + \Vert v_i - v_{i-1} \Vert^2_{R_v}$, with the corresponding weights $R_u$ and $R_v$.
\par
Based on this contouring formulation, we define a stochastic MPC problem that integrates the learned GP-model~\eqref{eq:learned_system} and minimizes the expected value of the cost function~\eqref{eq:cost} over a finite horizon of length N\@:
\begin{mini!}
    {U,V}{\E\left(\sum_{i=0}^{N-1} l(x_i,u_i,\Theta_i,v_i) \right)\label{eq:costExpValue}}
    {\label{eq:opt_orig}}{}
    \addConstraint{x_{i+1}}{= f(x_{i},u_{i}) + B_d (d(x_{i},u_{i}) + w_i)}
    \addConstraint{\Theta_{i+1}}{ = \Theta_i + v_i}
    \addConstraint{P(x_{i+1}}{ \in \mathcal{X}(\Theta_{i+1}))> 1- \epsilon\label{eq:ChanceState}}
    \addConstraint{u_{i}}{\in \mathcal{U}}
    \addConstraint{x_{0}}{= x(k) , \, \Theta_0 = \Theta(k) \eqc}
\end{mini!}
where $i=0,\ldots,N\!-\!1$ and $x(k)$ and $\Theta(k)$ are the current system state and the corresponding position on the centerline. The state constraints are formulated w.r.t.\ the centerline position at $\Theta_i$ as an approximation of the projection of the car position, and are in the form of chance constraints which guarantee that the track constraint~\eqref{eq:trackCon} is violated with a probability less than $1-\epsilon$. \par
Solving problem~\eqref{eq:opt_orig} is computationally demanding, especially since the distribution of the state is generally not Gaussian after the first prediction time step. In addition, fast sampling times -- in the considered race car setting of about $30 \text{ ms}$ -- pose a significant challenge for real-time computation. In the following subsections, we present a sequence of approximations to reduce the computational complexity of the GP-based NMPC problem for autonomous racing in~\eqref{eq:opt_orig} and eventually provide a real-time feasible approximate controller that can still leverage the key benefits of learning.
\subsection{Approximate Uncertainty Propagation}
\label{ssc:UncertProp}
%
At each time step, the GP $d(x_i,u_i)$ evaluates to a stochastic distribution according to the residual model uncertainty, which is then propagated forward in time, rendering the state distributions non-Gaussian. In order to solve~\eqref{eq:opt_orig}, we therefore approximate the distributions of the state at each prediction step as a Gaussian, i.e. $x_i \sim \mathcal{N}(\mu^x_i, \Sigma^x_i)$~\cite{Candela2003,Deisenroth2010,Hewing2017}. The dynamics equations for the Gaussian distributions can be found e.g.\ through a sigma point transform~\cite{Ostafew2016} or a first order Taylor expansion detailed in Appendix~\ref{app:EKF_GP}. We make use of the Taylor approximation offering a computationally cheap procedure of sufficient accuracy, resulting in the following dynamics for the mean and variance
\begin{subequations}
\begin{align}
	\mu^x_{i+1} &= f(\mu^x_i,u_i) + B_d \mu^d(\mu^x_i,u_i) \eqc \\
	\Sigma^x_{i+1} &= \tilde{A}_i \begin{bmatrix} \Sigma^x_i & \star \\ \nabla_{\!x} \mu^d(\mu^x_i,u_i) \Sigma^x_i & \Sigma^d(\mu^x_i,u_i) \end{bmatrix} \tilde{A}_i^T \label{eq:var_dyn} \eqc
\end{align}
\end{subequations}
where $\tilde{A}_i = \begin{bmatrix} \nabla_{\!x} f(\mu^x_i,u_i) & B_d \end{bmatrix}$ and the star denotes the corresponding element of the symmetric matrix. \par
%
\subsection{Simplified Chance Constraints}
\label{ssc:chance_constrains}
%

The Gaussian approximation of the state distribution allows for a simplified treatment of the chance constraints~\eqref{eq:ChanceState}. They can be approximated as deterministic constraints on mean and variance of the state using the following Lemma.
\begin{Lemma}\label{lm:chanceConstr}
	Let $n$-dimensional random vector $x \sim \mathcal{N}(\mu,\Sigma)$ and the set $\mathcal{B}^{x_c}(r) =\nobreak \left\{ x \, | \, \Vert x-x_c \Vert \leq r \right\}$. Then
	\[\Vert \mu - x_c \Vert \leq r - \sqrt{\chi^2_n(p) \lambda_{max}(\Sigma)} \Rightarrow \Pr(x \in \mathcal{B}^{x_c}(r)) \geq p,\]
	where $\chi^2_n(p)$ is the quantile function of the chi-squared distribution with $n$ degrees of freedom and $\lambda_{max}(\Sigma)$ the maximum eigenvalue of $\Sigma$.
\end{Lemma}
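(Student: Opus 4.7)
The plan is to decouple the off-center location $x_c$ from the covariance structure via the triangle inequality, and then reduce the resulting centered ball probability to a standard chi-squared tail bound by exploiting the worst-case eigenvalue of $\Sigma$.

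First I would translate to a zero-mean variable. Setting $y := x - \mu \sim \mathcal{N}(0,\Sigma)$, we have $x - x_c = y + (\mu - x_c)$, so by the triangle inequality
\begin{equation*}
\|y\| \le r - \|\mu - x_c\| \;\Longrightarrow\; \|x - x_c\| \le r.
\end{equation*}
In particular,
\begin{equation*}
\Pr(x \in \mathcal{B}^{x_c}(r)) \;\ge\; \Pr\bigl(\|y\| \le r - \|\mu - x_c\|\bigr).
\end{equation*}
Under the hypothesis $\|\mu - x_c\| \le r - \sqrt{\chi^2_n(p)\,\lambda_{\max}(\Sigma)}$, the right-hand bound satisfies $r - \|\mu - x_c\| \ge \sqrt{\chi^2_n(p)\,\lambda_{\max}(\Sigma)}$, so it suffices to show
\begin{equation*}
\Pr\bigl(\|y\|^2 \le \chi^2_n(p)\,\lambda_{\max}(\Sigma)\bigr) \;\ge\; p.
\end{equation*}

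Next I would convert this into a standard chi-squared statement. Writing $y = \Sigma^{1/2} z$ with $z \sim \mathcal{N}(0,I_n)$, we get $\|y\|^2 = z^T \Sigma z \le \lambda_{\max}(\Sigma)\,\|z\|^2$ (the Rayleigh quotient bound). Hence
\begin{equation*}
\Pr\bigl(\|y\|^2 \le \chi^2_n(p)\,\lambda_{\max}(\Sigma)\bigr) \;\ge\; \Pr\bigl(\|z\|^2 \le \chi^2_n(p)\bigr) = p,
\end{equation*}
where the last equality uses that $\|z\|^2 \sim \chi^2_n$ and the definition of the quantile function. Chaining the three inequalities yields $\Pr(x \in \mathcal{B}^{x_c}(r)) \ge p$.

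I do not expect any serious obstacle here: each of the three ingredients (triangle inequality to remove the offset, Rayleigh bound to dominate the ellipsoid by a ball, and the chi-squared quantile) is elementary. The only subtlety is conceptual — the bound is deliberately loose because it overapproximates the true confidence ellipsoid $\{x : (x-\mu)^T \Sigma^{-1}(x-\mu) \le \chi^2_n(p)\}$ by the smallest enclosing ball of radius $\sqrt{\chi^2_n(p)\,\lambda_{\max}(\Sigma)}$; this is what makes the resulting deterministic constraint on $\mu$ and $\Sigma$ tractable for the NMPC formulation, at the cost of some conservatism when $\Sigma$ is anisotropic.
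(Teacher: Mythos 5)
Your proof is correct and follows essentially the same route as the paper's: both outer-approximate the level-$p$ confidence ellipsoid of $x$ by the ball of radius $\sqrt{\chi^2_n(p)\,\lambda_{\max}(\Sigma)}$ centered at $\mu$ and then use the hypothesis on $\Vert \mu - x_c\Vert$ to fit that ball inside $\mathcal{B}^{x_c}(r)$; you merely spell out the set inclusions algebraically (triangle inequality, Rayleigh bound, chi-squared quantile). As a minor bonus, your factorization $y=\Sigma^{1/2}z$ avoids the $\Sigma^{-1}$ appearing in the paper's definition of $\mathcal{E}^x_p$ and so also covers singular $\Sigma$.
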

\begin{proof} Let $\mathcal{E}^x_p := \{ x \, | \, {(x-\mu)}^T \Sigma^{-1} (x-\mu) \leq \chi^2_n(p) \}$ be the confidence region of $x$ at level $p$, such that $\Pr(x \in \nobreak \mathcal{E}^x_p) \geq \nobreak p$.
  We have $\mathcal{E}^x_p \subseteq \mathcal{E}^{\tilde{x}}_p$ with $\tilde{x} \sim \mathcal{N}(\mu,\lambda_{max}(\Sigma) \, I)$, i.e. $\mathcal{E}^{\tilde{x}}_p$ is an outer approximation of the confidence region using the direction of largest variance.
  Now $\mu \in  \mathcal{B}^{x_c}(r- \nobreak \sqrt{\chi^2_n(p) \lambda_{max}(\Sigma)})$ implies $\mathcal{E}^{\tilde{x}}_p \subseteq \nobreak \mathcal{B}^{x}_c(r)$, which means $\Pr(x \in \mathcal{B}^{x}_c(r)) \geq \Pr(x \in \nobreak \mathcal{E}^{\tilde{x}}_p) \geq \Pr(x \in \nobreak \mathcal{E}^{x}_p) =\nobreak p$.
\end{proof}
Using Lemma~\ref{lm:chanceConstr}, we can formulate a bound on the probability of track constraint violation by enforcing
\begin{align} \label{eq:ChanceConstrMean}
\left\Vert \begin{bmatrix} \mu^X_i \\ \mu^Y_i \end{bmatrix} - \begin{bmatrix} X_c(\Theta_i) \\ Y_c(\Theta_i) \end{bmatrix} \right\Vert \leq r - \sqrt{\chi^2_2(p) \lambda_{max}(\Sigma_i^{XY})},
\end{align}
where $\Sigma_i^{XY} \in \mathbb{R}^{2 \times 2}$ is the marginal variance of the joint distribution of $X_i$ and $Y_i$.
This procedure is similar to constraint tightening in robust control. Here the amount of tightening is related to an approximate confidence region for the deviation from the mean system state.\par
Constraint $\eqref{eq:ChanceConstrMean}$ as well as the cost~\eqref{eq:cost} require the variance dynamics. The next section proposes a further simplification to reduce computational cost by considering an approximate evolution of the state variance.
%
\subsection{Time-Varying Approximation of Variance Dynamics}
\label{ssc:TVapprox}
%
The variance dynamics in~\eqref{eq:var_dyn} require $\frac{N}{2}(n^2 + n)$ additional variables in the optimization problem and can increase computation time drastically. We trade off accuracy in the system description with computational complexity by evaluating the system variance around an approximate evolution of the state and input. This state-action trajectory can typically be chosen as a reference to be tracked or by shifting a solution of the MPC optimization problem at an earlier time step. Denoting a point on the approximate state-action trajectory with $(\bar{\mu}^x_i, \bar{u}_i)$, the approximate variance dynamics are given by
\begin{align*}
\bar{\Sigma}^x_{i+1} &= \bar{A}_i \begin{bmatrix} \bar{\Sigma}^x_i & \star \\ \nabla_{\!x} \mu^d(\bar{\mu}^x_i,\bar{u}_i) \bar{\Sigma}^x_i & \Sigma^d(\bar{\mu}^x_i,\bar{u}_i) \end{bmatrix} \bar{A}_i^T
\end{align*}
with $\bar{A}_i = [\nabla_{\!x} f(\bar{\mu}^x_i, \bar{u}_i)\ B_d]$. The variance along the trajectory thus does not depend on any optimization variable and can be computed before the state measurement becomes available at each sampling time. The precomputed variance is then used to satisfy the chance constraints approximately, by replacing $\Sigma^{XY}$ with $\bar{\Sigma}^{XY}$ in~\eqref{eq:ChanceConstrMean}. The resulting set is denoted $\bar{\mathcal{X}}(\bar{\Sigma}^x_i, \Theta_i)$.
Figure~\ref{fg:uncert_ex} shows an example of a planned trajectory with active chance constraints according to this formulation with $\chi^2_2(p) = 1$. \par
\begin{figure}
	\center
	\includegraphics[width=0.54\linewidth]{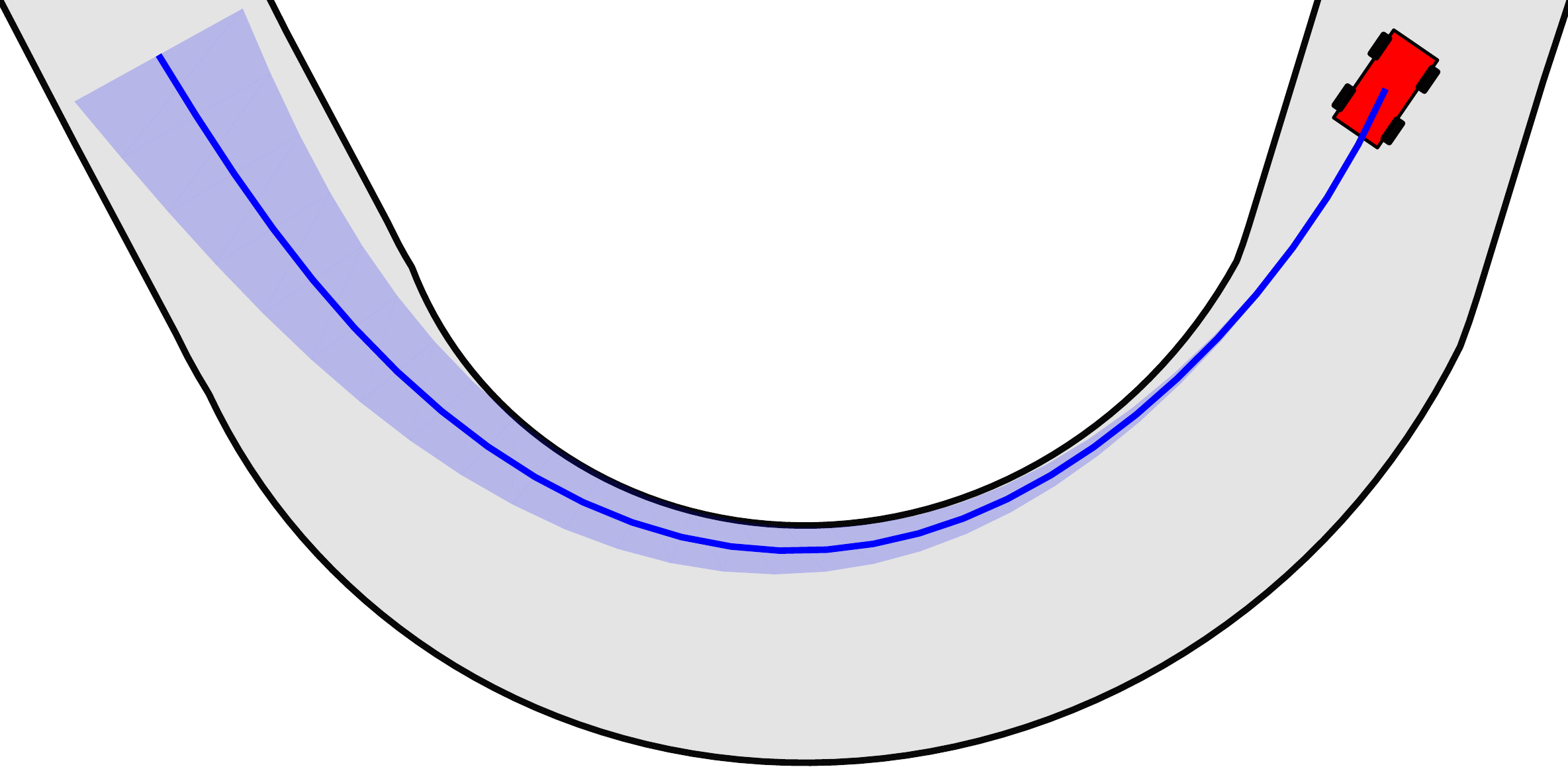}
	\caption{Planned trajectory with active chance constraints. Shown is the mean trajectory of the car with 1-$\sigma$ confidence level perpendicular to the car's mean orientation.}\label{fg:uncert_ex}
\end{figure}
In the following, we use similar ideas to reduce the computational complexity of the required GP evaluations by dynamically choosing inducing inputs in a sparse GP approximation.
%
\subsection{Dynamic Sparse GP}
\label{ssc:DSGP}
%
Sparse approximations as outlined in Section~\ref{ssc:sGP} can considerably speed up evaluation of a GP, with little deterioration of prediction quality. For fast applications with high-dimensional state-input spaces, however, the computational burden can still be prohibitive.
\par
We therefore propose to select inducing inputs locally at each sampling time, which relies on the idea that in MPC the area of interest at each sampling time typically lies close to a known trajectory in the state-action space. Similar to the approximation presented in the previous subsection, inducing inputs can then be selected along the approximate trajectory, e.g.\ according to a solution computed at a previous time step.
\par
We illustrate the procedure using a two-dimensional example in Figure~\ref{fg:DSGP_toyEx} showing the dynamic approximation for a simple double integrator. Shown is the contour plot of the posterior variance of a GP with two input dimensions $x_1$ and $x_2$. Additionally, two trajectories generated from an MPC are shown. The solid red line corresponds to a current prediction trajectory, while the dashed line shows the previous prediction, which is used for local approximation of the GP\@. As the figure illustrates, full GP and sparse approximation are in close correspondence along the predicted trajectory of the system. \par
The dynamic selection of local inducing points in a receding horizon fashion allows for an additional speed-up by computing successive approximations adding or removing single inducing points by means of rank 1 updates~\cite{Seeger2004}. These are applied to a reformulation of~\eqref{eq:sGP}, which offers better numerical properties~\cite{Quinonero-candela2005} and avoids inversion of the large matrix $Q^a_{\mathbf{z}\mathbf{z}} + \Lambda$,
\begin{subequations}
	\begin{align*}
		\tilde{\mu}^{a}_d(z) &= K^a_{z\mathbf{z}} \Sigma K^a_{\mathbf{z}_{ind},\mathbf{z}} \Lambda^{-1}{[\mathbf{y}]}_{\cdot,a} \, ,\\
		\tilde{\Sigma}^{a}_d(z) & = K^a_{zz} - Q^a_{zz} + K^a_{z\mathbf{z}_{ind}} \Sigma K^a_{\mathbf{z}_{ind} z} \eqc
	\end{align*}
\end{subequations}
with $\Sigma = {\left(K^a_{\mathbf{z}_{ind}\mathbf{z}_{ind}} + K^a_{\mathbf{z}_{ind} \mathbf{z}} \Lambda^{-1} K^a_{\mathbf{z} \mathbf{z}_{ind}}\right)}^{-1}$. Substitution of single inducing points corresponds to a single line and column changing in $\Sigma^{-1}$. The corresponding Cholesky factorizations can thus efficiently be updated~\cite{Nguyen-Tuong2009}.
\begin{figure}
	\center
	\setlength\figureheight{6cm}
	\setlength\figurewidth{7.5cm}
	\input{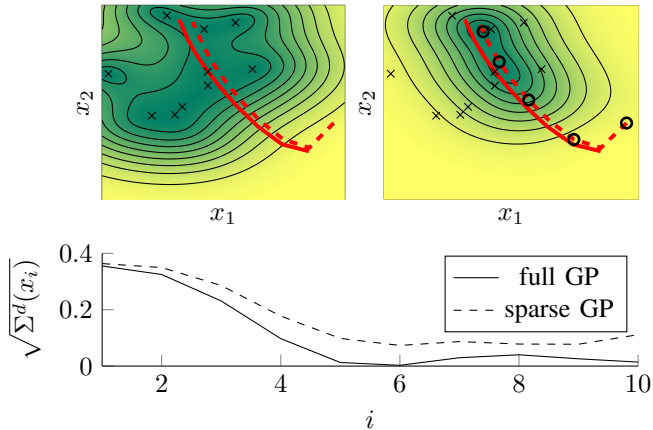}
	\vspace{-0.6cm}
	\caption{Contour plots of the posterior variance of a GP for the full GP (top left) and dynamic sparse approximation (top right). The  solid red line is the trajectory planned by an MPC, the dashed red line the trajectory of the previous time step used for the approximation, with inducing points indicated by black circles. The bottom plot shows the respective variances along the planned trajectory.}\label{fg:DSGP_toyEx}
\end{figure}
%
%
\subsection{Resulting Control Formulation for Autonomous Racing}
\label{ssc:Resulting Formulation}
%
We integrate the approximations presented in the previous sections in the learning-based MPC problem in~\eqref{eq:opt_orig} resulting in the following approximate optimization problem
\begin{mini!}
	{U,V}{ \E \left(\sum_{i=0}^{N-1} l(\mu^x_i,u_i,\Theta_i,v_i) \right)}
	{\label{eq:final_formulation}}{}
	\addConstraint{\mu^x_{i+1} }{= f(\mu^x_{i},u_{i}) + B_d \tilde{\mu}^d(\mu^x_{i},u_{i})}
	\addConstraint{\Theta_{i+1}}{ = \Theta_i + v_i}
	\addConstraint{\mu^x_{i+1}}{ \in \bar{\mathcal{X}}(\bar{\Sigma}^x_{i+1},\Theta_{i+1})\label{eq:chanceConstrFinal}}
	\addConstraint{u_i }{\in \mathcal{U}}
	\addConstraint{\mu^x_{0}}{ = x(k), \, \Theta_0 = \Theta(k) \, ,}
\end{mini!}
where $i = 0,\ldots,N\!-\!1$. By reducing the learned model to the mean GP dynamics and considering approximate variance dynamics and simplified chance constraints, the problem is reduced to a deterministic nonlinear program of moderate dimension. \par
In the presented form, the approximate optimization problem~\eqref{eq:final_formulation} still requires an optimization over a large spline polynomial corresponding to the entire track.
Since evaluation of this polynomial and its derivative is computationally expensive, one can apply an additional approximation step and quadratically approximate the cost function around the shifted solution trajectory from the previous sampling time, for which the expected value is equivalent to the cost at the mean. Similarly, $\Theta_i$ can be fixed using the previous solution when evaluating the state constraints~\eqref{eq:chanceConstrFinal}, such that the spline can be evaluated separately from the optimization procedure, as done in~\cite{Liniger2015}.
%
\section{Simulation}
\label{sc:simulation}
%
We finally evaluate the proposed control approach in simulations of a race. The race car is simulated using system~\eqref{eq:contSys} with $g_c$ resulting from a random perturbation of all parameters of the nominal dynamics $f_c$ by up to $\pm 15\%$ of their original value. We compare two GP-based approaches, one using the full GP $d(x_i,u_i)$ with all available data points and one a dynamic sparse approximation $\tilde{d}(x_i,u_i)$, against a baseline NMPC controller, which makes use of only the nominal part of the model $f_c$, as well as against a reference controller using the true system model, i.e.\ with knowledge of $g_c$.
%
\subsection{Simulation Setup}
\label{sc:simulation_setup}
%
We generate controllers using formulation~\eqref{eq:final_formulation}, both for the full GP and the dynamic sparse approximation with 10 inducing inputs along the previous solution trajectory of the MPC problem. The inducing points are placed with exponentially decaying density along the previous solution trajectory, putting additional emphasis on the current and near future states of the car. The prediction horizon is chosen as $N = 30$ and we formulate the chance constraints~\eqref{eq:chanceConstrFinal} with $\chi^2_2(p) = 1$.
To guarantee feasibility of the optimization problem, we implement the chance constraint using a linear quadratic soft constraint formulation. Specifically, we use slack variables $s_i \geq 0$, which incur additional costs $l_s(s_i) = \Vert s_i \Vert^2_{q_s} + c_s s_i$. For sufficiently large $c_s$ the soft constrained formulation is exact, if feasible~\cite{Kerrigan2000}.
To reduce conservatism of the controllers, constraints are only tightened for the first 15 prediction steps and are applied to the mean for the remainder of the prediction horizon, similar to the method used in~\cite{Carrau2016}.\par
The system is simulated for one lap of a race, starting with zero initial velocity from a point on the centerline under white noise of power spectral density $Q_w = \frac{1}{T_s}\diag([0.001;0.001;0.1])$. The resulting measurements from one lap with the baseline controller are used to generate 350 data-points for both GP-based controllers. Hyperparameters and process noise level were found through likelihood optimization, see e.g.~\cite{Rasmussen2006}. \par
To exemplify the learned deviations from the nominal system, Figure~\ref{fg:GPprediction} shows the encountered dynamics error in the yaw-rate and the predicted error during a lap with the sparse GP-based controller. Overall, the learned dynamics are in good correspondence with the true model and the uncertainty predicted by the GP matches the residual model uncertainty and process noise well. Note that the apparent volatility in the plot does not correspond to overfitting, but instead is due to fast changes in the input and matches the validation data.\par
Solvers were generated using FORCES Pro~\cite{FORCESPro} with a sampling time of $T_s = 30 \text{ ms}$ and the number of maximum solver iterations were limited to 75, which is sufficient to guarantee a solution of required accuracy. All simulations were carried out on a laptop computer with a 2.6 GHz i7-5600 CPU and 12GB RAM\@. \par
\begin{figure}
	\center
	\setlength\figureheight{3cm}
	\setlength\figurewidth{7cm}
	\input{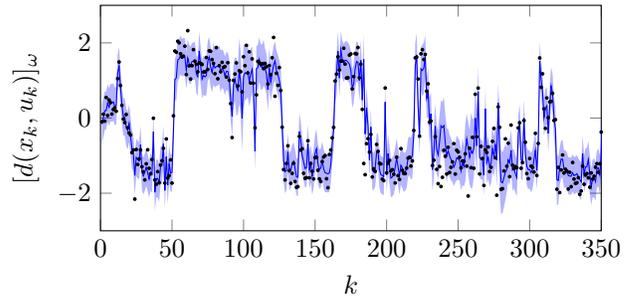}
	\caption{Prediction of the dynamic sparse GP with 10 inducing inputs during a race lap. Shown as black dots are the error on the yaw rate under process noise as encountered at each time step. The blue line shows the dynamics error predicted by the GP \@. The shaded region indicates the 2-$\sigma$ confidence interval, including noise.}\label{fg:GPprediction}
\end{figure}
%
%
\subsection{Results}
\label{ssc:results}
%
%
To quantify performance of the proposed controllers we compare the lap time $T_l$ and the average squared slack of the realized states $\overline{s_0^2}$ corresponding to state-constraint violations. We furthermore state average solve times $\overline{T}_{\!c}$ of the NMPC problem and its $99.9$th percentile $T_c^{99.9}$ over the simulation run. To demonstrate the learning performance we also evaluate the average 2-norm error in the system dynamics $\overline{\Vert e \Vert}$, i.e.\ the difference between the mean state after one prediction step and the realized state, $e(k\!+\!1) = \mu^x_1 - x(k\!+\!1)$. \par
\begin{figure}
	\center
	\setlength\figureheight{7cm}
	\setlength\figurewidth{6.5cm}
	\input{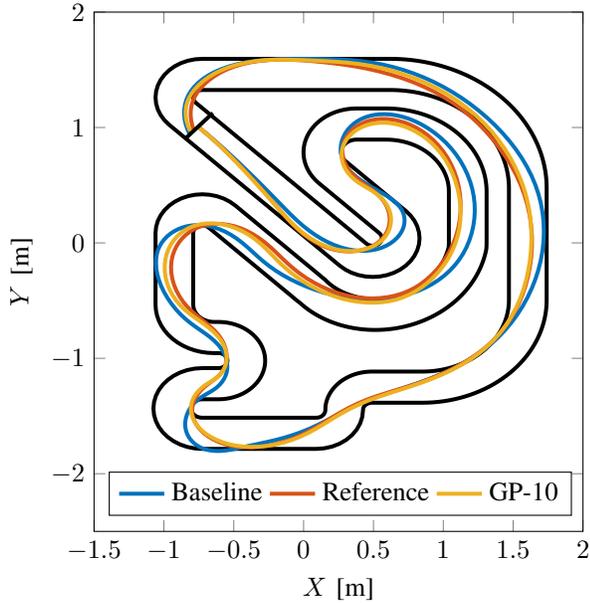}
	\caption{Resulting trajectories on the race track for simulations without process noise with baseline, reference and sparse GP-based controller.} \label{fg:res_no_noise}
\end{figure}
For direct comparison, we first evaluate controller performance in simulations without process noise. As evident in Figure~\ref{fg:res_no_noise}, the baseline controller performs visually suboptimally and is unable to guarantee constraint satisfaction, even in the absence of process noise. The reference controller and sparse GP-based controller (GP-10) perform similarly. Table~\ref{tb:res}(a) summarizes the results of the simulations without process noise. We can see that the full GP controller (GP-Full) matches the performance of the reference controller. It also displays only small constraint violations, while the reference controller exhibits some corner cutting behavior leading to constraint violations. This is due to unmodeled discretization error, also evident in the dynamics error of the reference controller. The discretization error is partly learned by the GPs, leading to lower error than even the reference controller. Overall the sparse GP controller demonstrates a performance close to that of the full GP controller, both in terms of lap time and constraint satisfaction and is able to significantly outperform the baseline controller.\par
\begin{table}[h]
	\center
	\begin{minipage}{\linewidth} \caption{Simulation results}\label{tb:res}
		\begin{subtable}{\linewidth} \caption{without process noise \hspace*{\fill}}  \vspace{-0.1cm}
			\begin{tabular}{cccccc} \toprule
				Controller	& $T_l$ [s] & $\overline{s_0^2}$ [$10^{-3}$] & $\overline{\Vert e \Vert}$ [-] & $\overline{T}_{\!c}$ [ms] & $T_c^{99.9}$ [ms]\\  \midrule
				Reference	& 8.64 & 4.50 & 0.18 & 9.4 & 19.1\\
				Baseline 	& 9.45 & 4.77 & 1.20  & 10.8 & 20.6\\
				GP-Full 	& 8.67 & 0.95 & 0.09 & 105.2 & 199.23\\
				GP-10$^a$	& 8.76 & 1.77 & 0.16 & 12.3	 & 26.9
			\end{tabular}
			\vspace{0.3cm}
		\end{subtable}
		\begin{subtable}{\linewidth} \caption{with process noise \hspace*{\fill}} \vspace{-0.1cm}\label{tb:res_noise}
			\begin{tabular}{cccccc} \toprule
				Controller	& $T_l$ [s] & $\overline{s_0^2}$ [$10^{-3}$] & $\overline{\Vert e \Vert}$ [-] & $\overline{T}_{\!c}$ [ms] & $T_c^{99.9}$ [ms] \\  \midrule
				Reference		& 8.76 & 2.88 & 0.33 & 9.7 & 20.8 \\
				Baseline$^b$ 	& 9.55 & 65.11 & 1.20  & 10.1 & 23.9\\
				GP-Full			& 8.80 & 0.68 & 0.23 & 102.0 & 199.4\\
				GP-10$^a$		& 8.90 & 1.20 & 0.28 & 12.1 & 25.6\\
			\end{tabular}
			\vspace{0.2cm}	\\
			\footnotesize{${}^a$Requires an additional $\approx 2.5$ ms for sparse approximation.} \\
			\footnotesize{${}^b$Eight outliers removed.} \vspace*{-0.3cm}
		\end{subtable}
	\end{minipage}
\end{table}
%
Table~\ref{tb:res}(b) shows the averaged simulation for different process noise realizations. The values are averaged over 200 runs, except for $T^{99.9}_c$, which is the $99.9$th percentile of all solve times. Qualitatively, the observations for the noise-free case carry over to the simulations in the presence of process noise. Most strikingly, the baseline NMPC controller displays severe constraint violations under noise. In eight cases this even causes the car to completely lose track. The runs were subsequently removed as outliers in Table~\ref{tb:res}(b). All other formulations tolerate the process noise well and achieve similar performance as in the noise-free case. The reference controller achieves slightly faster lap times than the GP-based formulations. These, however, come at the expense of higher constraint violations. Through shaping the allowed probability of violation in the chance constraints~\eqref{eq:chanceConstrFinal}, the GP-based formulations allow for a trade-off between aggressive racing and safety. \par
The simulations underline the real-time capabilities of the sparse GP-based controller. While the full GP formulation has excessive computational requirements relative to the sampling time of $T_s = 30 \text{ ms}$, the dynamic sparse formulation is solved in similar time as the baseline formulation. It does, however, require the successive update of the sparse GP formulation, which in our implementation took an additional $2.5 \text{ ms}$ on average. Note that this computation can be done directly after the previous MPC solution, whereas the MPC problem is solved after receiving a state measurement at each sample time step. The computation for the sparse approximation thus does not affect the time until an input is applied to the system, which is why we state both times separately. With $99.9\%$ of solve times below $25.6 \text{ ms}$, a computed input can be applied within the sampling time of $T_s = 30\text{ ms}$, leaving enough time for the subsequent precomputation of the sparse approximation. \par
The results demonstrate that the presented GP-based controller can significantly improve performance while maintaining safety, approaching the performance of the reference controller using the true model. They furthermore demonstrate that the controller is real-time implementable and able to tolerate process noise much better than the initial baseline controller. Overall, this indicates fitness for a hardware implementation.
%
\section{Conclusion}
\label{sc:conclusion}
%
In this paper we addressed the challenge of automatically controlling miniature race cars with an MPC approach under model inaccuracies, which can lead to dramatic failures, especially in a high performance racing environment. The proposed GP-based control approach is able to learn from model mismatch, adapt the dynamics model used for control and subsequently improve controller performance. By considering the residual model uncertainty, we can furthermore enhance constraint satisfaction and thereby safety of the vehicle. Using a dynamic sparse approximation of the GP we demonstrated the real-time capability of the resulting controller and finally showed in simulations that the GP-based approaches can significantly improve lap time and safety after learning from just one example lap.




\appendices
\section{Uncertainty propagation for nonlinear systems}
\label{app:EKF_GP}
%
Let $\mu^x_i$ and $\Sigma^x_i$ denote the mean and variance of $x_i$, respectively. Using the law of iterated expectation and the law of total variance we have
\begin{align*}
	\mu^x_{i+1} &= \E_{x_i}\left( \E_{d|x_i}\left( x_{i+1} \right) \right) \\
				&= \E_{x_i} \left( f(x_i,u_i) + B_d \mu^d(x_i,u_i) \right) \\
	\Sigma^x_{i+1} &= \E_{x_i}\left( \var_{d|x_i}\left( x_{i+1} \right) \right) + \var_{x_i} \left( \E_{d|x_i} \left( x_{i+1} \right)\right) \\
				&= \E_{x_i}\left( B_d \Sigma^d(x_i,u_i) B_d^T \right) \\
				&+ \var_{x_i} \left( f(x_i,u_i) + B_d \mu^d(x_i,u_i) \right)
\end{align*}
With a first order expansions of $f, \mu^d$ and $\Sigma^d$ around $x_i = \mu^x_i$ these can be approximated as~\cite{Candela2003}
\begin{align*}
	\mu^x_{i+1} &\approx f(\mu^x_i,u_i) + B_d \mu^d(\mu^x_i,u_i) \eqc \\
	\Sigma^x_{i+1} &\approx B_d \Sigma^d(\mu^x_i,u_i) B_d^T \\ &+ \nabla_{\!x} \tilde{f}(\mu^x_i,u_i) \Sigma^x_i {\left(\nabla_{\!x} \tilde{f}(\mu^x_i,u_i)\right)}^T
\end{align*}
with $\tilde{f}(\mu^x_i,u_i) = f(\mu^x_i,u_i) + B_d \mu^d(\mu^x_i,u_i)$.
%
%
\bibliographystyle{IEEEtran}
\bibliography{IEEEabrv,bib_OrcaGPMPC} 

\end{document}